\newif\ifcameraready
\camerareadytrue   
\newcommand\artifacturl{https://github.com/naver/timehash}

\documentclass[sigconf, nonacm]{acmart}

\setcopyright{none}
\renewcommand\footnotetextcopyrightpermission[1]{}

\usepackage{balance}
\usepackage{graphicx}
\usepackage{amsmath}
\usepackage{algorithm}
\usepackage{algorithmic}
\usepackage{url}
\usepackage{hyperref}
\usepackage{xspace}
\usepackage{booktabs}
\usepackage{multirow}
\usepackage{tikz}

\AtBeginDocument{\setlength{\parindent}{1.5em}}
\newcommand{\timehash}{\textsc{Timehash}\xspace}

\title{Timehash: Hierarchical Time Indexing for Efficient Business Hours Search}

\ifcameraready
  \author{Jinoh Kim}
  \affiliation{\institution{Naver Corporation}\city{Seongnam}\country{South Korea}}
  \email{jinoh9272@gmail.com}
  \orcid{0009-0006-0833-8788}
  \author{Jaewon Son}
  \affiliation{\institution{Naver Corporation}\city{Seongnam}\country{South Korea}}
  \email{sonjw14@gmail.com}
  \orcid{0009-0003-1319-5724}
\else
  \author{Anonymous Author(s)}
  \affiliation{\institution{Anonymous Institution}\country{}}
\fi

\begin{abstract}
\hspace{1.5em}Temporal range filtering is a critical operation in large-scale search systems, particularly for location-based services that need to filter businesses by operating hours. Traditional approaches either suffer from poor query performance (scope filtering), index size explosion (minute-level indexing), or reduced precision (coarse-grained indexing). Relational database solutions such as PostgreSQL TSRANGE with GiST indexing offer exact semantics but impose P50 latencies of 15--224\,ms at 100K--1M scale (up to 360\,ms at P95)---prohibitive for interactive search---and cannot be embedded within standard inverted index pipelines.

We present \timehash, a novel hierarchical time indexing algorithm that achieves over 97\% reduction in index size compared to minute-level indexing while maintaining 100\% precision. \timehash employs a flexible multi-resolution strategy with customizable hierarchical levels and integrates seamlessly into standard inverted index infrastructure without requiring dedicated database systems. Through empirical analysis on distributions from 12.6 million business records of a production location search service deployed for 18 months, we demonstrate a domain-informed methodology for selecting hierarchies guided by boundary-distribution analysis, with cross-dataset validation on the Yelp Open Dataset (127K US/CA businesses), where the same 5-level hierarchy reduces total terms to 0.77\% of the 1-minute baseline (a 99.2\% reduction; vs.\ 2.17\% / 97.8\% reduction on the production dataset).

We evaluate \timehash against naive inverted index approaches, PostgreSQL GiST, and a within-Elasticsearch BKD baseline. On the Yelp Open Dataset (127K US/CA businesses with real operating hours), within a single Elasticsearch deployment with matched indexing, \timehash achieves 1.14--2.17$\times$ lower P50 latency than native BKD on production-typical multi-predicate top-$K$ workloads ($K \leq 100$), with both methods converging at large $K$ where document materialization dominates. A five-level hierarchy (4h, 1h, 15m, 5m, 1m) reduces index terms to 9.6 per document---a 97.8\% reduction versus minute-level indexing and a 46$\times$ compaction factor---with zero false positives and zero false negatives. To extend scaling experiments beyond what Yelp supports, we additionally use synthetic POIs calibrated to production distributions: PostgreSQL GiST P50 latency grows $\sim$15$\times$ from 100K to 1M (14.7\,ms~$\rightarrow$~224\,ms) while \timehash stays under 12\,ms, and per-document cost remains constant from 100K to 12.6M POIs while supporting complex scenarios such as break times, irregular schedules, and midnight-spanning ranges.
\end{abstract}

\ccsdesc[500]{Information systems~Information retrieval}
\ccsdesc[300]{Information systems~Retrieval models and ranking}

\keywords{Temporal indexing, inverted index, hierarchical data structures, range queries, business hours search, time filtering}

\begin{document}

\maketitle

\begingroup
\renewcommand\thefootnote{}\footnote{\noindent
This is a preprint submitted to ACM CIKM 2026. This work is licensed under the Creative Commons Attribution 4.0 International License (CC BY 4.0). \url{https://creativecommons.org/licenses/by/4.0/}
}\addtocounter{footnote}{-1}\endgroup

\vspace{.3cm}
\begingroup\small\noindent\raggedright\textbf{Artifact Availability:}\\
Code and artifacts are available at \url{\artifacturl}.
\endgroup

\section{Introduction}
\label{sec:intro}

\hspace{1.5em}Temporal filtering has become a fundamental requirement in modern location-based search services. Users routinely search for ``restaurants open now'' or ``stores open late,'' expecting immediate and accurate results that reflect real-time business availability. This expectation poses a significant challenge for large-scale search systems that must efficiently filter millions of businesses by their operating hours while maintaining query performance and precision.

We encountered this challenge while building a location search service that handles over 12 million business listings and serves tens of millions of daily queries. \timehash has been deployed in this production service for 18 months, indexing 12.6 million POI records. A substantial portion of queries involves temporal constraints---finding businesses open now or open at a specific future time. The key insight driving our approach is that the search engine already maintains a highly optimized inverted index for location, category, and keyword filtering: the ideal solution encodes time ranges as \emph{plain index terms}, so temporal filtering composes with all other filters through the existing boolean retrieval machinery at zero infrastructure cost.

Real-world business operating hours introduce substantial complexity that simple range filters struggle to handle. Many businesses operate with break times---for example, a restaurant open 11:00\,AM to 2:00\,PM and again 5:00\,PM to 9:00\,PM. Others have irregular weekday/weekend schedules, operate across midnight boundaries (e.g., 10:00 PM to 2:00 AM), or run 24 hours. Traditional range filters cannot capture these patterns without either sacrificing precision or exploding index size, and neither approach scales to millions of businesses at minute-level precision.

\subsection{Current Approaches and Limitations}
\label{sec:intro:approaches}

\hspace{1.5em}Existing strategies face a fundamental tension among index efficiency, query performance, and precision. Coarse-grained indexing (e.g., hourly buckets) achieves small index size and fast queries but sacrifices precision and cannot represent break times. Fine-grained (minute-level) indexing achieves precision and supports break times but suffers from index explosion---660 terms per document for an 11-hour business, billions of terms for a million-POI collection~\cite{zobel2006inverted}. Scope filtering (query-time scanning) avoids index overhead but requires linear scans that do not scale.

PostgreSQL \texttt{TSRANGE}/\texttt{INT4RANGE} with GiST indexing~\cite{hellerstein1995generalized} offers exact interval semantics but imposes substantial latency at scale: P50 reaches 15\,ms at 100K and 224\,ms at 1M POIs (up to 360\,ms at P95), making it impractical for interactive search. Moreover, database-native solutions require cross-system joins that break the composability of inverted indexes. Table~\ref{tab:approaches} summarizes these tradeoffs; none achieves all four requirements: small index size, fast queries, minute-level precision, and support for complex patterns.

\begin{table}[t]
\centering
\caption{Comparison of temporal filtering approaches}
\label{tab:approaches}
\vspace{-2pt}
\begin{tabular}{@{}lccccc@{}}
\toprule
Approach & Size & Perf. & Prec. & Break & Scale \\
\midrule
Scope filter    & Min.  & Poor & High & No  & No  \\
Coarse (1h)     & Small & Fast & Low  & No  & Yes \\
Fine (1min)     & Large & Fast & High & Yes & No  \\
PostgreSQL GiST & Med.  & Med. & High & Yes & No  \\
\textbf{\timehash}       & \textbf{Small} & \textbf{Fast} & \textbf{High} & \textbf{Yes} & \textbf{Yes} \\
\bottomrule
\end{tabular}
\end{table}

\subsection{Our Contributions}
\label{sec:intro:contrib}

\hspace{1.5em}\timehash is, structurally, a static globally-aligned multi-resolution range decomposition materialized in an inverted index---an idea that was tried in pre-Lucene-6 \texttt{NumericRangeQuery}~\cite{mccandless2010lucene} and abandoned because its fixed-uniform-granularity trie produced hundreds of terms per minute-level field. Our contribution is not the existence of multi-resolution range terms; it is showing that this approach, which Lucene rejected as a general-purpose range index in favor of BKD trees, becomes the right answer in two concrete settings the literature has not addressed: (i)~when the underlying domain has a strongly clustered boundary distribution admitting a non-uniform hierarchy, and (ii)~when temporal filters must compose with non-temporal filters through a \emph{uniform posting-list access path}---yielding production top-$K$ speedups over BKD trees and avoiding the cross-system join required by RDBMS interval indexes. Concretely, this paper makes the following contributions:
\begin{enumerate}
\item We introduce a \emph{domain-informed non-uniform hierarchy} construction guided by boundary-distribution analysis (Sections~\ref{sec:related},~\ref{sec:problem},~\ref{sec:algorithm}). Unlike fixed-uniform tries, \timehash levels are chosen by analyzing the boundary-alignment distribution of the target domain (e.g., $\geq$99.5\% of POI boundaries align at 30-minute granularity), turning a generic technique into a domain-specific compactor. The same five-level hierarchy is validated on two independent production-scale datasets---12.6M production POIs (2.17\% of 1-minute baseline) and 127K US/CA Yelp businesses (0.77\%)---and remains stable across the top five US states within $\pm$0.06 percentage points (Section~\ref{sec:experiments:yelp}).
\item We expose hierarchy depth as an explicit \emph{precision--size knob} at indexing time. A 4-level configuration (4h, 1h, 15m, 5m) trades $\sim$0.05 precision for $\sim$12\% smaller term sets; a 6-level configuration (adding 30m) yields a further $\sim$5\% reduction at no precision cost. BKD precision is fixed by data type; this configurability is a \timehash-only design dimension (Section~\ref{sec:algorithm}, Table~\ref{tab:ablation}).
\item We prove $O(T/m_1)$ worst-case space complexity---a constant-factor $m_1{\times}$ reduction over naive $O(T)$ indexing---and show that under the bounded 24-hour domain the key count is at most 30 by analysis (28 in practice), with guaranteed zero false positives and zero false negatives (Section~\ref{sec:analysis}).
\item We provide comprehensive experimental validation against naive inverted index approaches, PostgreSQL GiST, and a within-Elasticsearch BKD baseline on real data (Yelp Open Dataset, 127K businesses) with matched indexing strategies. On production-typical multi-predicate top-$K$ workloads ($K \leq 100$), \timehash is 1.14--2.17$\times$ faster than BKD; methods converge at $K{\geq}1000$ where document materialization dominates (Section~\ref{sec:experiments:es}). \timehash deploys without infrastructure changes: keys are regular index terms, temporal filtering composes with other filters via standard boolean operations, and updates require no reindexing (Section~\ref{sec:implementation}).
\end{enumerate}

\noindent Components of the technique presented here are described in a previously published patent disclosure by Naver Corp.~\cite{kim2026timehash_patent}. This paper expands the description with the data-driven hierarchy methodology, theoretical analysis, and the comprehensive empirical evaluation summarized above.

The remainder of this paper is organized as follows: Section~\ref{sec:related} reviews related work, Section~\ref{sec:problem} formalizes the problem, Sections~\ref{sec:algorithm}--\ref{sec:analysis} present our algorithm and analysis, Section~\ref{sec:implementation} describes the implementation, Section~\ref{sec:experiments} evaluates performance, and Section~\ref{sec:conclusion} concludes.

\section{Background and Related Work}
\label{sec:related}

\subsection{Search-Engine Range Indexing: NumericRangeQuery and BKD}
\label{sec:related:lucene}

\hspace{1.5em}The closest historical antecedent of \timehash sits inside Lucene itself. Prior to v6, Lucene encoded numeric fields as trie-based multi-resolution terms (\texttt{NumericRangeQuery})~\cite{mccandless2010lucene}: structurally identical to \timehash---a globally-aligned multi-resolution range decomposition materialized as inverted-index terms---but with \emph{fixed uniform granularity} (8-bit precision steps by default), producing $>$400 terms per document for minute-level time fields. This index bloat motivated Lucene~6 to abandon term-based numeric encoding and adopt BKD trees (Block K-D Trees)~\cite{procopiuc2003bkd} for numeric/range fields. BKD supports efficient range lookups but executes outside the posting-list pipeline: BKD results must be composed with posting-list iterators through a non-uniform query plan, in contrast to the uniform skip-list AND iteration that \timehash's keyword-term approach enables (Section~\ref{sec:experiments:es}).

\timehash is the same globally-aligned range-term idea that \texttt{NumericRangeQuery} embodied, with two changes that turn a rejected general-purpose technique into a domain-specific solution: (1)~the hierarchy is \emph{non-uniform} and chosen by analyzing the target domain's boundary-alignment distribution---e.g., 4h/1h/15m/5m/1m for business hours, where 99.2\% of boundaries align at $\geq$30-minute granularity---rather than uniform power-of-two precision steps; (2)~hierarchy depth is exposed as a per-deployment precision--size knob (Section~\ref{sec:algorithm}, Table~\ref{tab:ablation}). The first lets \timehash hit ${\sim}10$ terms/doc instead of ${\sim}400$ for the same minute-level precision; the second is a configurability BKD does not offer.

\subsection{Database Interval Indexing and Spatial Hierarchies}
\label{sec:related:database}

\hspace{1.5em}Relational databases provide native temporal-interval indexing. PostgreSQL \texttt{TSRANGE}/\texttt{INT4RANGE} with GiST~\cite{hellerstein1995generalized} supports containment queries via \texttt{@>} through pluggable operator classes. While correct, GiST has two limitations for search systems: query latency scales poorly (P50 14.7\,ms at 100K $\to$ 224\,ms at 1M; Section~\ref{sec:experiments:pg}), and more critically it cannot be embedded in inverted-index pipelines---temporal filtering must execute as a separate database query and join with search results at the application layer, breaking inverted-index composability. R-trees~\cite{guttman1984rtree}, temporal variants (TB-tree~\cite{pfoser2000tbtree}), and interval trees~\cite{ramakrishnan2003database} face the same architectural mismatch. Analytics-oriented systems such as Druid~\cite{yang2014druid} expose range/bitmap indexes on time columns, but they are optimized for OLAP scans rather than low-latency multi-predicate top-$K$ in a Lucene-style posting-list pipeline; their bitmap indexes compose well at moderate cardinality but materialize differently from the skip-list AND iterator that \timehash targets. Allen's interval algebra~\cite{allen1983maintaining} formalizes thirteen interval relations; \timehash's semantics implement \textit{during}/\textit{contains}, trading general interval reasoning for inverted-index scalability.

Spatial hierarchical encoding inspired our approach. Geohash~\cite{niemeyer2008geohash}, building on quadtree~\cite{finkel1974quadtree,samet1984quadtree,samet2006foundations} ideas, encodes geographic points into hierarchical strings enabling proximity search via prefix sharing. \timehash differs in three ways: it encodes a time \emph{range} into multiple keys at different resolutions (not a single point), uses numeric encoding (e.g., \texttt{0809} for 8:09\,AM rather than opaque ASCII), and targets range-query filtering in inverted indexes rather than spatial proximity search.

\subsection{Time Encoding Schemes}
\label{sec:related:encoding}

\hspace{1.5em}The timehash library~\cite{usher2019timehash} applies Geohash-style encoding to single timestamps, producing variable-precision ASCII strings (e.g., \texttt{ae0f0ba1fc})---useful for time-series binning but unable to natively index time ranges (an 11:40\,AM--9:00\,PM window would require 560 opaque per-minute terms) and opaque to debugging. Standard representations (Unix timestamps, ISO~8601~\cite{iso86012004}) are not designed for inverted-index range filtering. Table~\ref{tab:encoding-comparison} summarizes the differences.

\begin{table}[t]
\centering
\caption{Comparison of time encoding approaches for a business operating 11:40\,AM--9:00\,PM}
\label{tab:encoding-comparison}
\vspace{-2pt}
\begin{tabular}{lcc}
\toprule
\textbf{Feature} & \textbf{timehash lib} & \textbf{\timehash (ours)} \\
\midrule
Input              & Single timestamp & Time range \\
Output per input   & 1 opaque hash    & Multiple keys \\
Range support      & No               & Native \\
Encoding           & ASCII (\texttt{ae0f...}) & Numeric (\texttt{0809}) \\
Terms (11:40--21:00) & 560            & 5 \\
Human-readable     & No               & Yes \\
\bottomrule
\end{tabular}
\end{table}

\section{Problem Definition}
\label{sec:problem}

\hspace{1.5em}A business $d_i$ may have multiple operating ranges---for example, morning and evening hours separated by a break, or weekday and weekend schedules. Let $R_i$ denote the set of $d_i$'s operating ranges, each of the form $[s, e)$ at minute granularity within the 24-hour domain $[0, 1440)$. A point query $t$ matches $d_i$ if at least one range in $R_i$ contains $t$ (i.e., $s \leq t < e$ for some $[s, e) \in R_i$); range queries $[t_1, t_2]$ extend the same predicate. In an inverted index, $d_i$ receives a set of index terms $T_i$ such that retrieval is term-based; the goal is to minimize $|T_i|$ while guaranteeing exact precision and recall. Production deployments require (i)~\emph{minute-level precision} (distinguishing 11:40 from 11:45 openings), (ii)~\emph{scalability} (millions of POIs, low query latency), (iii)~\emph{flexibility} (break times, midnight-spanning, 24-hour, weekday/weekend variations without multiplying $|T_i|$ proportionally), and (iv)~\emph{composability} with non-temporal filters via standard boolean posting-list intersection. We measure index size (terms/doc and MB), latency (P50/P95), build time, and precision/recall against ground truth.

\section{Timehash Algorithm}
\label{sec:algorithm}

\subsection{Hierarchical Multi-Resolution Decomposition}
\label{sec:algorithm:overview}

\hspace{1.5em}\timehash decomposes a time range into a small set of multi-resolution keys using a hierarchy of measures (granularities), e.g., $M = \{4h, 1h, 15m, 5m, 1m\}$ for business hours. Given $[start, end)$, the algorithm emits each aligned block fully contained in the range as a key at the corresponding level, and refines partial boundary blocks at the next finer level---long contiguous interiors get coarse blocks (few keys), while non-aligned boundaries get fine blocks (precision). For 11:40\,AM--9:00\,PM, naive minute-level indexing emits 560 terms; \timehash emits 5: $\{08113040, 081145, 12, 16, 2020\}$, a 99.1\% reduction with zero false positives and zero false negatives. Figure~\ref{fig:decomposition} illustrates the decomposition for 08:00--21:00. Multi-day or weekly patterns are supported by prefixing keys with day/date information (Section~\ref{sec:conclusion}).

\begin{figure}[t]
\centering
\begin{tikzpicture}[scale=0.34, every node/.style={font=\tiny}]
\node[anchor=west, font=\scriptsize] at (0, 4.2) {(a) Operating hours 08:00--21:00};
\foreach \x in {0,...,23} {
  \draw (\x, 3) rectangle (\x+1, 3.5);
  \node at (\x+0.5, 2.7) {\x};
}
\foreach \x in {8,...,20} {
  \fill[green!40] (\x, 3) rectangle (\x+1, 3.5);
  \draw (\x, 3) rectangle (\x+1, 3.5);
}
\node[anchor=west, font=\scriptsize] at (0, 1.8) {(b) 4-hour blocks: \texttt{08}, \texttt{12}, \texttt{16}};
\foreach \x/\label in {0/00, 4/04, 8/08, 12/12, 16/16, 20/20} {
  \draw (\x, 0.8) rectangle (\x+4, 1.3);
  \node at (\x+2, 0.5) {\label};
}
\fill[green!60] (8, 0.8) rectangle (12, 1.3);
\fill[green!60] (12, 0.8) rectangle (16, 1.3);
\fill[green!60] (16, 0.8) rectangle (20, 1.3);
\fill[blue!30] (20, 0.8) rectangle (21, 1.3);
\foreach \x in {8,12,16,20} { \draw (\x, 0.8) rectangle (\x+4, 1.3); }
\node[anchor=west, font=\scriptsize] at (0, -0.4) {(c) 1-hour refinement: \texttt{2020}};
\foreach \x in {20,...,23} {
  \draw (\x, -1.4) rectangle (\x+1, -0.9);
  \node at (\x+0.5, -1.7) {\x};
}
\fill[blue!50] (20, -1.4) rectangle (21, -0.9);
\draw (20, -1.4) rectangle (21, -0.9);
\node[anchor=west, font=\scriptsize\bfseries] at (0, -2.5) {Result: \{\texttt{08}, \texttt{12}, \texttt{16}, \texttt{2020}\}};
\end{tikzpicture}
\caption{Hierarchical decomposition of 08:00--21:00 into 4 keys. The 4-hour blocks \texttt{08}, \texttt{12}, \texttt{16} cover the aligned interior; the partial tail 20:00--21:00 is refined to 1-hour key \texttt{2020}.}
\Description{Diagram showing hierarchical decomposition of operating hours 08:00--21:00 into four timehash keys.}
\label{fig:decomposition}
\end{figure}

\subsection{Time Hierarchy Design}
\label{sec:algorithm:hierarchy}

\hspace{1.5em}\timehash exposes the hierarchy as a deployment-time parameter. Through empirical analysis of 12.6\,M production business records (Section~\ref{sec:experiments:hierarchy}), we adopt the five-level hierarchy $M=\{4h, 1h, 15m, 5m, 1m\}$: 4-hour blocks (keys \texttt{00,04,08,12,16,20}) cover long shift-aligned interiors; the 1h/15m levels capture 99.2\% of all start/end boundaries (83.7\% align at :00, 15.5\% at :30); 5m/1m levels handle non-aligned edge cases (e.g., 11:40\,AM openings) and guarantee exact precision. The finest 1-minute level is a hard production requirement (businesses do supply odd-minute hours); when applications guarantee 5-minute alignment, it can be dropped for a 4-level hierarchy reducing key count by 12\% at no precision cost (Table~\ref{tab:ablation}). Keys use a human-readable composite numeric encoding, read left to right: \texttt{08113040} = 4h\,\texttt{08} / 1h\,\texttt{11} / 15m\,\texttt{30} / 5m\,\texttt{40}, simplifying debugging compared to opaque ASCII hashes.

\subsection{Hash Key Generation Algorithm}
\label{sec:algorithm:generation}

\hspace{1.5em}Algorithm~\ref{alg:timehash} formalizes key generation. The \textsc{Cover} procedure on $[a, b)$ at level $i$ emits each $m_i$-aligned block fully contained in $[a, b)$ as a key, and recurses on partial boundary blocks at level $i{+}1$ with a prefix $\pi$ that records the enclosing coarser blocks. The function $\text{enc}(s, m)$ returns the two-digit identifier of the $m$-aligned block starting at $s$, and ``$\cdot$'' is string concatenation; the nesting condition $m_i \mid m_{i-1}$ guarantees prefix-aligned composite keys (\texttt{08113040} reads 4h/1h/15m/5m left-to-right). For 11:40--21:00, the algorithm produces $\{08113040, 081145, 12, 16, 2020\}$: 4h blocks \texttt{12}, \texttt{16} cover the interior; the start boundary 11:40--12:00 refines to \texttt{08113040} (11:40--11:44) and \texttt{081145} (11:45--11:59); the end boundary 20:00--21:00 refines to \texttt{2020}.

\begin{algorithm}[t]
\caption{Timehash Key Generation}
\label{alg:timehash}
\begin{algorithmic}[1]
\REQUIRE Time range $[start, end)$; measures $M = \{m_1, \ldots, m_k\}$ with $m_i \mid m_{i-1}$
\ENSURE Set of hash keys $H$
\STATE $H \leftarrow \emptyset$;\, \textsc{Cover}$(H, \varepsilon, 1, start, end)$;\, \textbf{return} $H$
\STATE \textbf{procedure} \textsc{Cover}$(H, \pi, i, a, b)$
\STATE \quad \textbf{if} $i > k$ \textbf{ or } $a \geq b$ \textbf{ then return}
\STATE \quad $m \leftarrow m_i$,\quad $s \leftarrow \lfloor a/m \rfloor \cdot m$
\STATE \quad \textbf{if} $s < a$ \textbf{ then} \textsc{Cover}$(H, \pi {\cdot} \text{enc}(s,m), i{+}1, a, \min(s{+}m, b))$;\, $s \leftarrow s{+}m$
\STATE \quad \textbf{while} $s + m \leq b$ \textbf{ do} $H \leftarrow H \cup \{\pi {\cdot} \text{enc}(s,m)\}$;\, $s \leftarrow s{+}m$
\STATE \quad \textbf{if} $s < b$ \textbf{ then} \textsc{Cover}$(H, \pi {\cdot} \text{enc}(s,m), i{+}1, s, b)$
\end{algorithmic}
\end{algorithm}

\subsection{Query Term Generation and Complex Scenarios}
\label{sec:algorithm:query}

\hspace{1.5em}For a point query at time $t$ (e.g., 14:30), \timehash generates $k$ query keys, one per hierarchy level containing $t$: \texttt{12}, \texttt{1214}, \texttt{121430}, \texttt{12143030}, \texttt{1214303030}. The union of these $k$ posting lists is the candidate set. Lookup overhead is constant in dataset size ($k{=}5$ for our hierarchy). Range queries $[t_1, t_2]$ apply Algorithm~\ref{alg:timehash} to the query side as well and match key sets; production workloads are dominated by point queries (``open now''). Complex scenarios reduce to this primitive: \textbf{break times} (e.g., 11:00--14:00 + 17:00--21:00) process each sub-range independently and union the key sets; \textbf{midnight-spanning ranges} (22:00--02:00) split at 24:00; \textbf{24-hour operations} use the six Level-1 keys covering the full day; \textbf{weekly patterns} prefix keys with day-of-week (\texttt{mon1212})---an extra coarsest level that multiplies hot-term cardinality by 7 but preserves per-document key counts and worst-case bounds.

\section{Theoretical Analysis}
\label{sec:analysis}

\hspace{1.5em}\textbf{Space bound.} For a range $[from, to)$ of length $T$ minutes, naive minute-level indexing emits $O(T)$ keys. Algorithm~\ref{alg:timehash} tiles the interior with complete coarsest-level blocks of size $m_1$; the nesting condition $m_i \mid m_{i-1}$ aligns inner block boundaries to all finer levels, so only the two outer boundaries require refinement. The key count decomposes as
\begin{equation}
|H| \leq \lfloor T/m_1 \rfloor + B, \quad B = 2\sum_{i=2}^{k}\left(\frac{m_{i-1}}{m_i} - 1\right)
\end{equation}
where $B$ is independent of $T$. For $M=\{240,60,15,5,1\}$, $B=24$ and the 24-hour worst case is $\lfloor 1440/240 \rfloor + 24 = 30$ (empirical worst is 28; both boundaries' maximum depth consumes two outermost 4-hour blocks, leaving 4 interior). Versus 1{,}440 naive keys this is $>97\%$ reduction; key generation runs in $O(T/m_1+B)=O(1)$ for the bounded 24-hour domain.

\textbf{Correctness.} \timehash guarantees precision = recall = 1, established by the following theorems.

\begin{theorem}[Zero False Negatives]
For any query time $t \in [start, end)$, at least one query key matches at least one index key.
\end{theorem}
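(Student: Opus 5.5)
The plan is to prove a slightly stronger invariant about the recursion in Algorithm~\ref{alg:timehash}. For every call \textsc{Cover}$(H,\pi,i,a,b)$ that occurs during key generation, and every minute $t\in[a,b)$, we claim that $H$ ends up containing a key of the form $\pi\cdot\text{enc}(s_i,m_i)\cdots\text{enc}(s_j,m_j)$ for some level $j\ge i$. Here $s_\ell=\lfloor t/m_\ell\rfloor m_\ell$ is the start of the $m_\ell$-aligned block containing $t$.

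We also claim that the prefix $\pi$ passed to any level-$i$ call equals $\text{enc}(s_1,m_1)\cdots\text{enc}(s_{i-1},m_{i-1})$ for every $t\in[a,b)$. Granting both claims, the index key covering $t$ is exactly the level-$j$ composite key of $t$. This is precisely the $j$-th query key the query generator produces for $t$ (one key per level, built by concatenating $t$'s enclosing block identifiers), so a match exists. The theorem then follows by applying the invariant to the root call $(\varepsilon,1,start,end)$.

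\textbf{Prefix claim.} We prove it by induction on recursion depth. A recursive call at level $i{+}1$ is made on a subinterval of a single $m_i$-aligned block $[s,s{+}m)$: either $[a,\min(s{+}m,b))$ with $s=\lfloor a/m\rfloor m$, or $[s,b)$ with $b<s{+}m$. Hence every $t$ in it has $s_i=s$, and appending $\text{enc}(s,m_i)$ preserves the claim.

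\textbf{Coverage claim.} We prove it by a case split on where $t$ falls in the level-$i$ loop:
\begin{itemize}
\item[(a)] $t$ lies in the leading partial block $[a,\min(s{+}m,b))$, handled by the first recursive call;
\item[(b)] $t$ lies in some full block $[s',s'{+}m)\subseteq[a,b)$ emitted by the while loop, which gives $j=i$ directly;
\item[(c)] $t$ lies in the trailing remainder $[s,b)$, handled by the last recursive call.
\end{itemize}
These three pieces partition $[a,b)$: the while loop advances $s$ by $m$ until $s+m>b$, so the union of full blocks together with the two remainders is exactly $[a,b)$. In cases (a) and (c) we apply the induction hypothesis to the child call.

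The induction runs downward from the finest level, and this is where the main obstacle sits: we must show the recursion never drops a minute at level $k$. At $i=k$ we have $m_k=1$ in the production hierarchy (more generally, $m_k$ divides all integer endpoints). Then $s=a$ always, there is no leading partial block, the while loop emits every minute, and no remainder is left, so the base case has no uncovered $t$.

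I would state the assumption $m_k=1$, or that all endpoints are multiples of $m_k$, explicitly, since otherwise a remainder at level $k$ would silently hit the \textbf{if} $i>k$ return. I would also note that the nesting condition $m_i\mid m_{i-1}$ is what makes $s_{i}$ refine $s_{i-1}$, so that query keys and index keys share the same prefix chain. Break times and midnight splits reduce to this result by taking unions of the per-range key sets.
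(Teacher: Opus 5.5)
Your proposal is correct and takes the same route as the paper's proof: every minute $t$ is covered by an emitted level-$j$ block, and the nesting condition $m_i \mid m_{i-1}$ forces that block's composite key to equal the level-$j$ query key for $t$. The paper only asserts the coverage step and the prefix agreement; your induction over the recursion proves both, and it makes explicit the assumption ($m_k = 1$, or $m_k$ dividing the range endpoints) that the paper leaves implicit.
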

\begin{proof}
Algorithm~\ref{alg:timehash} covers every minute of $[start, end)$ by at least one key at some hierarchy level $j$. Let $K_j$ be that key, representing the $m_j$-aligned block $[s, s+m_j)$ containing $t$. Query generation produces a level-$j$ key $Q_j$ representing the same $m_j$-aligned block containing $t$. Since $s$ uniquely determines the enclosing coarser blocks (nesting condition $m_i \mid m_{i-1}$), the prefixes of $K_j$ and $Q_j$ agree, so $K_j = Q_j$. The invariant is preserved across discontinuous sub-ranges (break times, midnight wraparound) since each is processed independently.
\end{proof}

\begin{theorem}[Zero False Positives]
A query key matches a document's index key only if the query time falls within the document's operating hours.
\end{theorem}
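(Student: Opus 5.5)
The plan is to prove two invariants of Algorithm~\ref{alg:timehash} and then combine them with the way query keys are generated. \emph{Invariant (a), soundness of emitted keys:} every key $\pi\cdot\text{enc}(s,m_i)$ that \textsc{Cover} inserts into $H$ names an $m_i$-aligned block $[s,s+m_i)$ that lies entirely inside the top-level range $[start,end)$. \emph{Invariant (b), injectivity of the encoding:} a composite key determines its level $j$ and its block $[s,s+m_j)$ uniquely.

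Invariant (a) will be proved by induction on the recursion depth. The inductive hypothesis is that every call \textsc{Cover}$(H,\pi,i,a,b)$ satisfies $[a,b)\subseteq[start,end)$. This holds for the initial call. Each recursive call passes either $[a,\min(s+m,b))$ or $[s,b)$ with $s\ge a$, so both argument ranges are subsets of $[a,b)$. Inside a call, a key is emitted only in the \textbf{while} loop, under the guard $s+m\le b$. The loop variable $s$ starts at $\lfloor a/m\rfloor m$ and is advanced past $a$ whenever $s<a$, so every emitted block satisfies $a\le s$ and $s+m\le b$. Hence every emitted block is contained in $[a,b)\subseteq[start,end)$.

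For invariant (b), I will use that $\text{enc}$ has fixed width (two digits per level). Therefore the length of a key determines its level $j$, and, as already used in the preceding proof of Zero False Negatives, the concatenated prefix together with the last component determines $s$. The nesting condition $m_i\mid m_{i-1}$ is what makes the prefix $\pi$ equal to the encodings of the unique coarser aligned blocks containing $s$. So a key at level $j$ is a canonical function of $(j,s)$, and distinct aligned blocks receive distinct strings.

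Combining the two, let $Q_j$ be a query key for time $t$. By construction it encodes the unique $m_j$-aligned block $[\lfloor t/m_j\rfloor m_j,\ \lfloor t/m_j\rfloor m_j+m_j)$ containing $t$. Suppose $Q_j$ equals some index key $K$. By (b), $K$ has the same level and the same block, so $t$ lies in the block of $K$. By (a), that block is contained in $[start,end)$, so $t$ is in the document's operating hours. For documents with several ranges (break times, midnight splits, weekday prefixes), $T_i$ is the union of the per-range key sets. A match therefore comes from some single range, and the argument applies to that range. Day prefixes simply add one more injective coarsest component.

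The main obstacle is invariant (b): making rigorous that keys at different levels can never collide as strings, and that the prefix $\pi$ built during recursion coincides with the one built at query time. Key length separates levels. The recursion passes $\pi\cdot\text{enc}(s,m)$ exactly for the block enclosing the sub-range, so an induction on depth shows that $\pi$ always encodes the chain of aligned ancestors. This mirrors the query generation, which lists the ancestors of $t$ at each level.
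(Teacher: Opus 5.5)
Your proof is correct and is essentially the paper's argument: every emitted key names an aligned block contained in $[start,end)$, and a query key names the aligned block containing $t$. A string match therefore forces $t$ into the document's range, with break times and midnight splits handled range by range. The paper states this contrapositively and simply asserts the containment property, leaving implicit that distinct blocks receive distinct keys. You prove containment by induction on the recursion and make the injectivity explicit via the fixed-width, ancestor-prefixed encoding, which is a more careful rendering of the same proof.
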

\begin{proof}
Algorithm~\ref{alg:timehash} only adds keys for blocks fully contained within $[start, end)$. For $t \notin [start, end)$, any query key $Q_j$ represents an $m_j$-aligned block containing $t$, which is therefore not contained in $[start, end)$ and cannot equal any index key. The invariant is preserved across discontinuous sub-ranges since each is processed independently.
\end{proof}

\textbf{Greedy optimality.} Under nesting $m_i \mid m_{i-1}$, the algorithm produces a minimum-cardinality key set covering $[start, end)$: by induction on hierarchy depth, at each level it emits every fully-contained aligned block and refines partials only at the next finer level; replacing any emitted block with finer ones strictly increases the count, and nesting forbids a coarser alternative for an emitted block. Which hierarchy is best is a separate, empirical question driven by the target domain's boundary distribution (Sections~\ref{sec:experiments:hierarchy},~\ref{sec:experiments:ablation}).

\section{Implementation}
\label{sec:implementation}

\hspace{1.5em}\timehash is a stateless, thread-safe library exposing two functions: \texttt{getIndexTerms(from,to)} generates index keys for a range in \texttt{hhmm} format (e.g., $[1140,2100]\mapsto\{08113040, 081145, 12, 16, 2020\}$); \texttt{getQueryTerms(hhmm)} generates the $k{=}5$ query keys for a point time. Open-source reference implementations in C++, Python, Go, JavaScript, Java, Rust, Kotlin, and Scala are available\ifcameraready\ at \url{\artifacturl}\else\ at \url{\anonymousurl}\fi; all benchmarks use the Python reference. Each key becomes an ordinary inverted-index term alongside location, category, and other attribute terms, so the integration requires no infrastructure changes, composes with non-temporal filters through standard boolean posting-list intersection, and supports incremental updates (only the changed document's keys regenerate). The indexer parses each operating range (a restaurant with breaks supplies, e.g., 1100--1400, 1700--2100), unions $\leq 30$ keys per range into the posting lists in $O(1)$ per document; the query processor unions the 5 query-key posting lists and AND-composes with other filters before ranking. Standard optimizations apply (key caching for common times, parallel posting-list lookup, delta/variable-byte encoding).

\section{Experimental Evaluation}
\label{sec:experiments}

\hspace{1.5em}Production telemetry is not redistributable; experiments use the Yelp Open Dataset (127K US/CA businesses) for cross-dataset validation and synthetic POIs calibrated to production distributions (Section~\ref{sec:problem}) for scaling experiments.

\subsection{Experimental Setup}
\label{sec:experiments:setup}

\hspace{1.5em}Our evaluation uses two data sources. (i)~The \textbf{Yelp Open Dataset} (127K US/CA businesses with real operating hours) anchors the cross-dataset hierarchy validation (Section~\ref{sec:experiments:yelp}) and the within-Elasticsearch BKD comparison (Section~\ref{sec:experiments:es})---our main empirical claim. (ii)~\textbf{Synthetic POIs calibrated to production distributions} extend scaling experiments to 12.6M---beyond what Yelp supports---since the production telemetry of the location search service where \timehash has been deployed for 18 months is not redistributable. Analysis of the production start-time distribution reveals strong clustering: 83.7\% of POIs open at :00 and 15.5\% at :30, together comprising 99.2\% of all records. Break-time frequency is 9.1\%, and average operating duration is approximately 454 minutes (standard deviation 120).

Per-query production telemetry is not redistributable, but the synthetic generator reproduces production-measured start/end hour distributions, boundary-alignment ratios, break-time frequency, and operating duration; the hierarchy selected on these statistics ports unchanged to Yelp (Section~\ref{sec:experiments:yelp}), externally validating the calibration. All experiments run on an Apple M5 Pro (12-core CPU, 48\,GB unified memory) under macOS with ES~7.17.4 and Python~3.12. All in-memory benchmarks use Python with \texttt{perf\_counter\_ns}, reporting the minimum of 5 runs over 1,000 random point queries in the 08:00--22:00 window (seed 42). PostgreSQL experiments use PostgreSQL~16 on localhost with an \texttt{INT4RANGE} column and a GiST index, queried via \texttt{hours @> \$t::integer}. The synthetic data generator and all benchmark scripts (\texttt{benchmark.py}, \texttt{benchmark\_es\_fair.py}, \texttt{benchmark\_yelp.py}) are available at \url{\artifacturl}; the seeded generator byte-reproduces the numbers in this paper.

\textbf{Baselines}: (1) \emph{Scope filter}: Linear scan over all documents per query---no index, exact precision. (2) \emph{1-minute index}: Naive inverted index with one term per minute open; equivalent to using the timehash library~\cite{usher2019timehash}. (3) \emph{5-minute index}: Coarse-grained at 5-minute granularity. (4) \emph{1-hour index}: Hour-level bucket index. (5) \emph{PostgreSQL GiST}: \texttt{INT4RANGE} column with GiST index on the same synthetic data.

\subsection{Hierarchy Optimization}
\label{sec:experiments:hierarchy}

\hspace{1.5em}Table~\ref{tab:hierarchy-opt} reports total index term count for 2--6 level configurations on 12.6M synthetic POIs, relative to single-level 1-minute indexing (5.59B terms). Among 5-minute-precision configurations, 5M alone eliminates 80\% of terms; adding 1H cuts to 3.20\%; intermediate levels (30M, 15M) drive the ratio down to 1.45\%---consistent with 99.2\% of boundaries aligning at 30-minute granularity. To preserve exact precision on the 0.8\% non-5-minute-aligned boundaries, the production reference hierarchy \texttt{4H, 1H, 15M, 5M, 1M} adds a 1-minute finest level, yielding 2.17\% (last row; 9.6 terms/doc in Table~\ref{tab:index-size}) with precision 1.000. The 2H vs.\ 4H coarsest difference is $<$1\%; we adopt 4-hour for cleaner alignment with natural shift patterns. Practitioners should optimize granularities for their own data.

\begin{table}[ht]
\centering
\caption{Data-driven hierarchy optimization (12.6M synthetic POIs). Ratio relative to single-level 1-minute indexing (5.59B terms). All but the last row are 5-minute-precision configurations; the last row is the production reference hierarchy with a 1-minute finest level for exact precision.}
\label{tab:hierarchy-opt}
\vspace{-2pt}
\begin{tabular}{@{}lcc@{}}
\toprule
Configuration & Depth & Ratio \\
\midrule
5M only              & 1 & 19.9\% \\
1H, 5M               & 2 & 3.20\% \\
1H, 30M, 5M          & 3 & 2.36\% \\
2H, 1H, 5M           & 3 & 2.56\% \\
2H, 1H, 30M, 5M      & 4 & 1.72\% \\
2H, 1H, 30M, 15M, 5M & 5 & 1.45\% \\
\midrule
\textbf{4H, 1H, 15M, 5M, 1M} & \textbf{5} & \textbf{2.17\%} \\
\bottomrule
\end{tabular}
\end{table}

\subsection{Cross-Dataset Hierarchy Validation}
\label{sec:experiments:yelp}

\hspace{1.5em}To test whether the domain-informed hierarchy generalizes beyond a single dataset, we re-ran the analysis on the Yelp Open Dataset~\cite{yelp_dataset}---127{,}123 US/CA businesses with hours, 801{,}015 day-ranges across 24 states. Yelp's boundary distribution closely tracks the production dataset's: 86.7\% at :00 and 12.9\% at :30 (vs.\ 83.7\% / 15.5\% in the production dataset), with $\geq$99.5\% of boundaries on $\geq$30-minute marks in both. The same reference hierarchy \texttt{4H, 1H, 15M, 5M, 1M} reduces total terms to \textbf{0.77\% of the 1-minute baseline on Yelp} (a 99.2\% reduction) vs.\ 2.17\% on the production dataset (97.8\% reduction)---Yelp's longer average operating duration (670\,min vs.\ 454\,min) lets coarse 4-hour blocks cover more of each range, yielding a stronger reduction. Per-state stability is high: the top five US states (PA, FL, TN, IN, MO) each independently reproduce the global ratio within $\pm$0.06 percentage points, indicating that within US business hours the hierarchy is not state-sensitive. We restrict claims to \emph{cross-dataset} (production dataset vs.\ Yelp) and \emph{cross-regional within US}; cross-cultural breadth (Asia/EU) is out of scope. The methodology is dataset-grounded, but the hierarchy it produces for business hours is robust across two production-scale datasets and across US states.

\subsection{Index Size Comparison}
\label{sec:experiments:index}

\hspace{1.5em}Table~\ref{tab:index-size} compares average index terms per document on 100K synthetic POIs. \timehash achieves 9.6 terms/doc---a \textbf{97.8\% reduction} versus 1-minute indexing (443.4 terms/doc)---while maintaining 100\% precision. The 1-hour index achieves comparable compactness (8.1 terms/doc) but precision drops to 0.855: coarse-grained keys cause queries near hour boundaries to match businesses that are actually closed. \timehash is the only \emph{Pareto-optimal} solution across all three dimensions: 1-hour achieves comparable compactness (8.1 terms/doc) but loses precision (0.855); 1-minute achieves exact precision but at 46$\times$ larger index (443.4 terms/doc); \timehash uniquely achieves both simultaneously. Across the full 12.6M-POI dataset, the five-level hierarchy reduces total index terms from 5.59 billion (1-minute baseline, Table~\ref{tab:hierarchy-opt}) to approximately 121 million (9.6 terms/doc $\times$ 12.6M POIs), a 97.8\% reduction driven by the start-time clustering where 99.2\% of boundaries align at $\geq$30-minute multiples.

\begin{table}[ht]
\centering
\caption{Index size and accuracy comparison (100K synthetic POIs). Reduction relative to 1-minute baseline. Precision against scope-filter ground truth over 100 queries.}
\label{tab:index-size}
\vspace{-2pt}
\begin{tabular}{@{}lccc@{}}
\toprule
Method & Terms/Doc & Reduction & Prec. \\
\midrule
1-minute & 443.4 & --- & 1.000 \\
5-minute & 89.2  & 79.9\% & 0.990 \\
1-hour   & 8.1   & 98.2\% & 0.855 \\
\textbf{\timehash}  & \textbf{9.6} & \textbf{97.8\%} & \textbf{1.000} \\
\bottomrule
\end{tabular}
\end{table}

Table~\ref{tab:range-keys} reports the exhaustive \timehash key count across all possible start/end time combinations at one-minute granularity. Short ranges ($<$1 hour) average 6.5 keys versus 30 naive minute-level terms, while long ranges (4--12 hours) average 13.1 keys versus 461. Even the worst case (28 keys for a near-24-hour range with non-aligned boundaries) remains far below the corresponding 961 naive terms. The algorithm is most effective for the typical business operating-hours range of 8--12 hours, where it achieves $>$96\% key reduction.

\begin{table}[ht]
\centering
\caption{Measured \timehash key count by range length (exhaustive enumeration over all 1-minute-aligned start/end pairs)}
\label{tab:range-keys}
\vspace{-2pt}
\begin{tabular}{@{}lccc@{}}
\toprule
Range Length & Avg Keys & Min--Max & 1-min Terms \\
\midrule
{$<$1 h}   & 6.5  & 1--14 & 30  \\
1--4 h     & 9.2  & 1--20 & 148 \\
4--12 h    & 13.1 & 2--25 & 461 \\
12--24 h   & 15.4 & 4--28 & 961 \\
\bottomrule
\end{tabular}
\end{table}

\subsection{Comparison with Database Indexing}
\label{sec:experiments:pg}

\hspace{1.5em}Compared against PostgreSQL with \texttt{INT4RANGE}+GiST on identical synthetic data, both methods achieve 100\% precision but diverge sharply in scaling: GiST P50 grows $\sim$15$\times$ from 14.7\,ms at 100K to 224\,ms at 1M (P95 up to 360\,ms), while \timehash stays under 12\,ms at 1M; on index size, GiST consumes 119.7\,MB for 1M POIs versus 9.6 terms/doc in an inverted index. Since the two stacks differ (Python in-process vs. PostgreSQL with SQL parsing/IPC), absolute multipliers should not be read as pure algorithmic comparison---the methodologically clean comparison is the within-Elasticsearch evaluation in Section~\ref{sec:experiments:es}. The qualitative point that survives is the architectural mismatch: GiST cannot be embedded within an inverted-index pipeline, requiring a cross-system join that breaks inverted-index composability. Full per-method results are provided in the artifact repository.

\subsection{Within-Elasticsearch Multi-Predicate Composability}
\label{sec:experiments:es}

\hspace{1.5em}Section~\ref{sec:experiments:pg} compared \timehash against PostgreSQL across two distinct stacks, so absolute multipliers cannot be cleanly attributed to the algorithm. To isolate the algorithmic comparison we evaluate two strategies \emph{inside the same Elasticsearch~7.17 deployment} on real data (Yelp Open Dataset, 127{,}123 businesses with hours, categories, and state): (a)~native \texttt{integer\_range} fields with \texttt{contains} queries (BKD trees); and (b)~\timehash keys as ordinary keyword terms with \texttt{terms} filters. \textbf{Both indexes use one document per business with multi-valued range or keyword fields}, eliminating any document-count asymmetry. Both methods compose with non-temporal filters natively via ES \texttt{bool/filter}---no application-layer join is required for either.

We characterize the latency--$K$ curve for three predicate compositions: P1 = time, P2 = time AND category, P3 = time AND category AND state. Categories and states are sampled uniformly from the top five in the dataset. Each measurement uses 1{,}000 random queries (seed 42, 100 warmup, single-shard ES); we report the minimum of 3 timed runs per query under warm cache to suppress JVM GC and OS-level jitter (sub-millisecond $K{=}10$ values are jitter-bounded; the relative ranking is stable across runs). Algorithmic correctness is verified at fetch size 200K: precision\,=\,recall\,=\,1.000 at every $K$ and predicate count for both methods.

\begin{table}[ht]
\centering
\caption{Multi-predicate composability on Yelp Open Dataset (127K businesses, real data; one doc per business, multi-valued fields). P50 latency in ms, 1{,}000 queries (seed~42), 100 warmup queries, ES~7.17 single shard. Precision\,=\,recall\,=\,1.000 verified at all $K$ and predicate counts. \textbf{Bold} = winning method per cell.}
\label{tab:es-kcurve}
\footnotesize
\setlength{\tabcolsep}{3pt}
\begin{tabular}{@{}r rr rr rr@{}}
\toprule
& \multicolumn{2}{c}{\textbf{P1: time}} & \multicolumn{2}{c}{\textbf{P2: +cat}} & \multicolumn{2}{c}{\textbf{P3: +cat+state}} \\
\cmidrule(lr){2-3} \cmidrule(lr){4-5} \cmidrule(lr){6-7}
$K$ & BKD & \timehash & BKD & \timehash & BKD & \timehash \\
\midrule
10      & 0.83 & \textbf{0.38} & 0.89 & \textbf{0.57} & 0.64 & \textbf{0.50} \\
100     & 1.21 & \textbf{0.79} & 1.39 & \textbf{1.04} & 1.27 & \textbf{1.12} \\
1{,}000 & 5.67 & 5.26 & 6.27 & 6.19 & 6.65 & 6.39 \\
10{,}000 & 52.5 & 53.3 & 54.1 & 55.3 & 8.78 & 8.67 \\
\bottomrule
\end{tabular}
\\\vspace{2pt}
\footnotesize
\timehash speedup ($K{=}10$/$K{=}100$): P1 2.17$\times$/1.53$\times$, P2 1.57$\times$/1.33$\times$, P3 1.28$\times$/1.14$\times$.
\end{table}

Table~\ref{tab:es-kcurve} reports the result. \textbf{At production-typical top-$K$ ($K \leq 100$), \timehash achieves a 1.14--2.17$\times$ P50 speedup over BKD} across all three predicate counts. The gap decays as $K$ grows: at $K=1{,}000$ the methods are within 8\%, and at $K=10{,}000$ document materialization dominates and the methods are within 2\% across all predicate counts.

The mechanism is \emph{access-path uniformity}. At low $K$ the query short-circuits after collecting $K$ matches, so the dominant cost is the time to find each match. \timehash's five query keys traverse posting lists and intersect with category/state posting lists entirely within ES's optimized skip-list AND iterator. BKD, in contrast, traverses a tree to produce a doc-id stream that must then compose with posting-list iterators---two access paths instead of one. At large $K$, total query time becomes dominated by \texttt{\_source} fetching and JSON serialization (linear in $K$), hiding the algorithmic gap. Beyond composability, BKD precision is fixed by data type whereas \timehash exposes hierarchy depth as a per-deployment knob (Table~\ref{tab:ablation}); the strongest baseline for \timehash remains inverted-index architectures lacking a BKD-equivalent native range type.

\subsection{Scalability Analysis}
\label{sec:experiments:scale}

\hspace{1.5em}Table~\ref{tab:scalability} validates scalability from 100K to 12.6M synthetic POIs. Three key properties are confirmed: First, average terms per document remains constant at 9.6 regardless of scale, as expected from the algorithm's data-independent decomposition. Second, build time scales linearly with document count: $0.74$\,s at 100K to $87.66$\,s at 12.6M. Third, query latency grows linearly with data size due to Python set-union cost over larger posting lists; a production implementation with roaring bitmaps~\cite{lemire2016roaring} or compressed posting-list pipelines (e.g., PISA~\cite{mallia2019pisa}) would achieve substantially lower latency. At 12.6M POIs, the Python implementation achieves P50 of 116\,ms---still preferable to PostgreSQL GiST's projected multi-second latency at that scale. The sub-12\,ms figure in the abstract refers to \timehash at the 1M scale at which GiST was directly benchmarked (Section~\ref{sec:experiments:pg}); the 116\,ms here reflects Python set-union cost at 12.6M and would drop substantially with a roaring-bitmap-backed production implementation.

\begin{table}[ht]
\centering
\caption{\timehash scalability from 100K to 12.6M synthetic POIs (Python reference implementation, 1,000 queries, seed 42)}
\label{tab:scalability}
\small
\vspace{-2pt}
\begin{tabular}{@{}rcrr r@{}}
\toprule
Scale & Terms/Doc & Build\,(s) & P50\,($\mu$s) & P95\,($\mu$s) \\
\midrule
100K  & 9.6 &  0.74 &     540 &     957 \\
1M    & 9.6 &  8.32 &  11,828 &  62,942 \\
5M    & 9.6 & 48.26 &  45,072 &  73,326 \\
12.6M & 9.6 & 87.66 & 116,421 & 198,427 \\
\bottomrule
\end{tabular}
\end{table}

At 12.6M POIs the index contains ${\sim}121$ million \timehash terms over only 174 unique keys, vs.\ ${\sim}5.6$ billion terms for minute-level indexing---a 46$\times$ reduction. The small unique-key cardinality is what makes high-frequency posting lists (e.g., ``12'' or ``08'') cache-resident in production.

\subsection{Ablation Study}
\label{sec:experiments:ablation}

\hspace{1.5em}Table~\ref{tab:ablation} validates the five-level hierarchy design by systematically removing individual levels and measuring the impact on average key count, computed exhaustively over all possible start/end time combinations at one-minute granularity. Note that these exhaustive averages (5.8 keys/range) are lower than the per-document averages in Table~\ref{tab:index-size} (9.6 terms/doc), because the exhaustive enumeration weights all range lengths equally, whereas real business data is skewed toward longer operating hours (8--12 hours).

\begin{table}[ht]
\centering
\caption{Ablation study: impact of removing hierarchy levels (exhaustive enumeration over all 1-minute-aligned start/end pairs)}
\label{tab:ablation}
\vspace{-2pt}
\begin{tabular}{@{}lcc@{}}
\toprule
Configuration & Avg Keys & $\Delta$ \\
\midrule
Full (4h, 1h, 15m, 5m, 1m) & 5.8  & ---         \\
Remove 4h                   & 8.8  & +51\%       \\
Remove 15m                  & 7.4  & +27\%       \\
Remove 5m                   & 7.5  & +28\%       \\
Remove 1h                   & 6.9  & +19\%       \\
Remove 1m                   & 5.1  & $-$12\%$^*$ \\
\midrule
3-level (4h, 1h, 1m)        & 18.3 & +214\%      \\
4-level (4h, 1h, 15m, 1m)   & 7.5  & +28\%       \\
6-level (+30m)               & 5.5  & $-$5\%      \\
\bottomrule
\multicolumn{3}{@{}l@{}}{\footnotesize $^*$Precision drops to ${\sim}$95\%: non-5-min-aligned boundaries} \\
\multicolumn{3}{@{}l@{}}{\footnotesize cannot be exactly represented without 1-min level.} \\
\end{tabular}
\end{table}

Removing coarse levels (4h, 1h) causes the largest key count increase, as long contiguous ranges can no longer be covered by single keys. Removing the 4-hour level alone increases key count by 51\%, since 8--12 hour businesses that previously needed 2--3 interior keys now need many more 1-hour keys. Removing the 1-minute level reduces key count by 12\% but sacrifices precision, since boundary minutes not aligned to 5-minute multiples cannot be exactly represented (affecting $\sim$0.8\% of POI boundaries). The 3-level alternative (4h, 1h, 1m) produces 214\% more keys due to missing intermediate levels, confirming that intermediate granularities are essential for efficient boundary handling. Adding a 6th level (30m) provides only 5\% improvement over the 5-level design, confirming that our five-level hierarchy achieves the optimal tradeoff between compactness and precision for the business hours domain.

\section{Conclusion}
\label{sec:conclusion}

\hspace{1.5em}We presented \timehash, a hierarchical time-range index materializing a static globally-aligned, domain-informed multi-resolution decomposition as ordinary inverted-index terms. On 100K--12.6M synthetic POIs, \timehash achieves 46$\times$ term reduction at 100\% precision (97.8\%/99.2\% on production/Yelp); within ES with matched indexing, \timehash is 1.14--2.17$\times$ faster than BKD on production top-$K$ multi-predicate workloads ($K{\leq}100$). The decomposition is domain-agnostic---the hierarchy-selection methodology (Section~\ref{sec:experiments:hierarchy}) ports to other domains, and day/date prefixes (e.g., \texttt{mon1212}) handle weekly patterns. Limitations: 24-hour scope, minute-level precision (\texttt{hhmmss} is a direct extension), and five posting-list lookups per query (compensated by ${\sim}240\times$ smaller lists).

\balance

\section*{GenAI Usage Disclosure}

In compliance with the ACM Policy on the use of AI and the CIKM~2026 submission policies, we provide a full disclosure of the use of generative AI tools in the preparation of this work.

\textbf{Tools used.} Anthropic Claude (Sonnet and Opus model families) was used as a writing and code-editing assistant throughout the preparation of this paper.

\textbf{Source material and substantive content.} The problem formulation, algorithm design, theoretical claims, experimental setup, all measured results, and the production deployment experience that motivates this work originate with the authors. Source material provided to the AI assistant included the authors' prior work---a confidential internal engineering tech-share presentation and a published patent disclosure related to the techniques described herein---together with the algorithm specification and the benchmark results.

\textbf{Writing.} Generative AI was used extensively, including to (i)~draft and restructure paragraphs from author-supplied outlines and source material, (ii)~compress and reorganize sections for the page limit, (iii)~suggest framing and positioning relative to related work, and (iv)~edit and proofread author-written prose. Every paragraph in the final manuscript was reviewed, edited where necessary, and accepted by the authors; no technical claim appears in the paper that was not derived from author-supplied source material or author-verified experimental results.

\textbf{Code and experiments.} The C++ reference implementation of the core algorithm (\texttt{getIndexTerms}, \texttt{getQueryTerms}), the correctness invariants, and all experimental measurements reported in this paper were authored, executed, and reproduced by the authors. Generative AI was used to (i)~translate the authors' C++ reference implementation into the additional open-source language ports released alongside this paper (Python, Go, JavaScript, Java, Rust, Kotlin, Scala); (ii)~draft benchmark harnesses, plotting scripts, and parser utilities; and (iii)~suggest Python and Elasticsearch API usage patterns. All such code was executed, tested, and verified by the authors against the C++ reference. The seeded benchmark scripts released in the artifact repository byte-reproduce all numbers reported here.

\textbf{Data.} No generative AI was used to synthesize or modify any data reported in this paper. The synthetic POI generator (Section~\ref{sec:experiments:setup}) is a deterministic procedure parameterized by production distributions characterized in Section~\ref{sec:problem}; the Yelp Open Dataset is used as released by Yelp under its terms.

\textbf{AI-assisted sections (specific scope).} For transparency, we list below the sections in which AI substantively assisted with drafting or restructuring from author-supplied source material and benchmark results:
\begin{itemize}
  \item \S\ref{sec:related} Related Work: compression from four subsections to three; merging of database and spatial-hierarchy discussions.
  \item \S\ref{sec:problem} Problem Definition: redraft from four subsections into a single paragraph.
  \item \S\ref{sec:algorithm} \timehash Algorithm: compression of five subsections to four; removal of one redundant figure and tightening of the worked example.
  \item \S\ref{sec:analysis} Theoretical Analysis: redraft from theorem-form proofs into three paragraphs.
  \item \S\ref{sec:implementation} Implementation: compression of two subsections into one paragraph.
  \item \S\ref{sec:experiments:setup} Experimental Setup opening; \S\ref{sec:experiments:pg}/\S\ref{sec:experiments:es} prose refinements.
  \item \S\ref{sec:intro:contrib} Contributions list: restructuring from six items to four.
  \item \S\ref{sec:conclusion} Conclusion: merger with the prior Discussion section.
  \item Bibliography: AI suggested candidate references missing from earlier drafts (BKD original~\cite{procopiuc2003bkd}, Roaring bitmaps~\cite{lemire2016roaring}, TB-tree~\cite{pfoser2000tbtree}, Druid~\cite{yang2014druid}, PISA~\cite{mallia2019pisa}), drafted the corresponding \texttt{bibitem} entries, and cross-checked their metadata; the authors independently verified each reference's existence and metadata via Google Scholar / DBLP and confirmed that the citation supports the claim at its body-text location.
  \item This GenAI Usage Disclosure section itself.
\end{itemize}

\textbf{Representative prompt categories.} Prompts fell into categories such as: (i)~``compress this section to $N$ words while keeping these key claims''; (ii)~``merge subsections X and Y, prioritizing the technical content''; (iii)~``rewrite this paragraph for clarity, do not change any numbers''; (iv)~``translate the C++ reference implementation in file X.cpp to language Y, preserving the public API''; (v)~``suggest a more compact wording for [phrase] under [$N$-character] limit''. No prompts requested the generation of novel algorithmic ideas, novel experimental results, or interpretations not derivable from author-supplied source.

\textbf{Post-editing process.} For each AI-produced draft, the authors verified every quantitative claim against the source material or experimental logs, edited prose for accuracy and tone, removed any unsupported statements, and confirmed that all citations and \LaTeX{} cross-references resolve correctly. Rebuilding the manuscript was performed after each substantive change to catch broken references and to monitor the page budget. The worst-case key-count computation in Section~\ref{sec:analysis} ($B{=}24$, 30-key upper bound, 28-key empirical worst) was independently re-derived by the authors; reference metadata was re-checked against Google Scholar / DBLP entries.

\textbf{Tasks not AI-assisted.} The C++ reference implementation, the algorithm design, the data-driven hierarchy selection (including the enumeration in Table~\ref{tab:hierarchy-opt}), all experimental measurements (latency, term counts, precision/recall, ablation), the production deployment decisions described in this work, the patent invention disclosure, and the choice of baselines and evaluation methodology were authored entirely by the authors without AI assistance.

\textbf{Accountability.} The authors take full responsibility for the accuracy, originality, and integrity of all content in this paper.

\bibliographystyle{ACM-Reference-Format}

\end{document}